\newcommand{\pre}[1]{\prescript{#1}{}}
\newtheorem{theorem}{Theorem}
\newtheorem{definition}[theorem]{Definition}
\newtheorem{lemma}[theorem]{Lemma}
\newtheorem{assumption}[theorem]{Assumption}
\newtheorem{proposition}[theorem]{Proposition}
\newtheorem{remark}[theorem]{Remark}
\newcommand{\II}{\mathcal{I}_d}
\newcommand{\R}{{\mathbb{R}}}
\newcommand{\TF}{\mathcal{F}}
\newcommand{\N}{{\mathbb{N}}}
\newcommand{\op}{{\mathcal H}}
\newcommand{\SF}{\mathcal{S}}
\newcommand{\Let}{:=}
\def\bb#1{{\textcolor{black}{{\bf}#1}}}
\definecolor{myco}{rgb}{0.2660 0.5740 0.1980}
\newcommand{\Z}{\mathbb{Z}}
\begin{document}
	
	\title[Symbolic Models for a Class of Impulsive Systems]{Symbolic Models for A Class of Impulsive Systems}

	\author{Abdalla Swikir$^1$}
	\author{Antoine Girard$^2$}
	\author{Majid Zamani$^{3,4}$}
	\address{$^1$Department of Electrical and Computer Engineering, Technical University of Munich, Germany.}
	\email{abdalla.swikir@tum.de}
	\address{$^2$Universit\'e Paris-Saclay, CNRS, CentraleSup\'elec, Laboratoire des signaux et syst\`emes
		91190, Gif-sur-Yvette, France.}
	\email{Antoine.Girard@l2s.centralesupelec.fr}
	\address{$^3$Computer Science Department, University of Colorado Boulder, USA.}
	\address{$^4$Computer Science Department, Ludwig Maximilian University of Munich, Germany.}
	\email{majid.zamani@colorado.edu}
	\maketitle

	\begin{abstract} 
		Symbolic models have been used as the basis of a systematic framework to address control design of several classes of hybrid systems with sophisticated control objectives. However, results available in the literature are not concerned with impulsive systems which are an important modeling framework of many applications.   
		In this paper, we provide an approach for constructing symbolic models for \bb{a class of} impulsive systems possessing some stability properties. We formally relate impulsive systems and their symbolic models using a notion of so-called alternating simulation function. We show that behaviors of the constructed symbolic models are approximately equivalent to those of the impulsive systems.
		Finally, we illustrate the effectiveness of our results through a model of storage-delivery process by constructing its symbolic model and designing controllers enforcing some safety specifications. 
	\end{abstract}
	\vspace{-0.3cm}
	\section{Introduction}
	Symbolic models have been the \bb{aim} of intensive study in the last two decades since they provide a mechanism for reducing complexity in the analysis and control of cyber-physical systems \cite{Tabu, Belta}. They serve as abstract mathematical models where each symbolic state and input represent a collection of continuous states and inputs in the original concrete model. As they have finite number of states and inputs, they enable the use of  correct-by-construction methods from the computer science community to design controllers
	for a wide variety of systems. For instance, they allow one to use automata-theoretic methods \cite{MalerPnueliSifakis95} to design controllers for hybrid systems with respect to logic specifications such as those expressed as linear temporal logic (LTL) formulae \cite{Katoen}.  
	In such frameworks, controllers designed for symbolic models can be refined to ones for concrete systems based on some behavioral relation between original systems and their symbolic models such as approximate alternating simulation relations \cite{pt09} or feedback refinement relations \cite{7519063}.
	
	The synthesis of symbolic models for different classes of systems has been investigated, among many others, in the following papers: for incrementally stable and incrementally forward complete nonlinear control systems in \cite{POLA20082508} and \cite{6082386}, respectively; for nonlinear switched systems in \cite{Girard,zamani2015symbolic,SAOUD201858}; for \bb{nonlinear control systems with known constant time delays and time-varying delays} in \cite{pola2010,pola2015}; for networked control systems in \cite{8355500,8010291}, and finally for incrementally stable infinite-dimensional systems with finite-dimensional input spaces in \cite{Girard2014,JZ3}. 
	All the aforementioned approaches essentially take a monolithic view of the systems while constructing symbolic models.
	On the other hand, different compositional methods for constructing symbolic models have been recently introduced in the literature with or without imposing stability assumptions over the network; see \cite{meyer,7403879,SWIKIR2019,8728138,SWIKIR2019a,SWIKIR2019d} and references therein. Although the literature on symbolic models is very rich, unfortunately, there are no results so far on constructing symbolic models for impulsive systems.
	\vspace{-0.08cm}
	
	Impulsive systems are an important class of hybrid systems that contain
	discontinuities or jumps (also referred to as impulses) in the state and input trajectories of the
	system governed by discrete dynamics \cite{4806347,Haddad}. They serve as an important modeling framework for a very large variety of applications, e.g. power electronics, sample-data systems, bursting rhythm models in
	medicine, and some models in economics; see \cite{Miller, Liu} and the references therein. Hence, constructing symbolic models for impulsive systems enlarges the class of systems for which designing correct-by-construction controllers enforcing complex logic specifications is possible. 
	\bb{In this work, we consider time-dependent impulsive systems in which the distance between the impulses is assumed to belong to a finite set. Such a class of systems is well studied in the literature; see \cite{Rios} and references therein. For example, this class of systems models the dynamics of the estimation error in networked control systems \cite[Section 8.2.]{Teel} while assuming time instants of the reception of measurements are nondetermined but lie in a finite set.}

	This paper provides for the first time an approach for synthesizing symbolic models for \bb{a class of} impulsive systems. The symbolic models constructed in this work are complete as their behaviors are approximately equivalent to those of the concrete systems \cite{Tabu}. First, we introduce a class
	of transition systems which allows us to model impulsive systems and their
	symbolic models in a common framework. Then we recall a notion of so-called alternating simulation function to relate two transition systems. 
	Such a function allows one to determine quantitatively the
	mismatch between the observed behavior of two systems, and implies the existence of an approximate alternating simulation relation between them \cite{pt09}. Second, we provide a methodology for constructing symbolic models together with their alternating simulation functions for impulsive systems possessing some incremental stability properties. In particular, we require that either the continuous or the discrete dynamic of the impulsive system to be incrementally input-to-state stable \cite{angeli09} while the other one is forward complete \cite{6082386}. Given such an incremental property, we show that the constructed symbolic model is indeed a complete one \cite{Tabu} (cf. Remark \ref{biss}).
	Finally, we apply our results to a model of storage-delivery process by constructing its symbolic model under different stability properties. We also design a controller maintaining the number of items in the storage in a desired range.
	
	\section{Notation and Preliminaries}\label{1:II}
	\subsection{Notation}\label{note}
	We denote by $\R$, $\Z$, and $\N$ the set of real numbers, integers, and non-negative integers,  respectively.
	These symbols are annotated with subscripts to restrict them in
	the obvious way, e.g., $\R_{>0}$ denotes the positive real numbers. We denote the closed, open, and half-open intervals in $\R$ by $[a,b]$,
	$(a,b)$, $[a,b)$, and $(a,b]$, respectively. For $a,b\in\N$ and $a\le b$, we
	use $[a;b]$, $(a;b)$, $[a;b)$, and $(a;b]$ to
	denote the corresponding intervals in $\N$.
	Given any $a\in\R$, $\vert a\vert$ denotes the absolute value of $a$. Given any $u=(u_1,\ldots,u_n)\in\R^{n}$, the infinity norm of $u$ is defined by $\Vert u\Vert=\max_{1\leq i\leq n}|u_i|$.  
	Given a function $\nu: \R_{\ge0} \rightarrow \R^n $, the supremum of $ \nu $ is denoted by $ \Vert \nu\Vert_\infty $; we recall that $ \Vert \nu\Vert_\infty := \text{sup}_{t\in\R_{\ge0}}\Vert \nu(t)\Vert$. Given $\mathbf x:\R_{\geq0}\rightarrow\R^{n},\forall t, s\in\R_{\geq0}$ with $t\geq s$, we define $\mathbf{x}(\pre{-}t)=\lim_{s\rightarrow t}\mathbf{x}(s)$.
	We denote by $\text{card}(\cdot)$ the cardinality of a given set and by $\emptyset$ the empty set. \bb{Given sets $X$ and $Y$, we denote by $f:X\rightarrow Y$ an ordinary map of $X$ into $Y$, whereas $f:X\rightrightarrows Y$ denotes a set-valued map \cite{Rock0000}.} For any set \mbox{$S\subseteq\R^n$} of the form $S=\bigcup_{j=1}^MS_j$ for some $M\in\N$, where $S_j=\prod_{i=1}^n [c_i^j,d_i^j]\subseteq \R^n$ with $c^j_i<d^j_i$, and nonnegative constant $\eta\leq\tilde{\eta}$, where $\tilde{\eta}=\min_{j=1,\ldots,M}\eta_{S_j}$ and \mbox{$\eta_{S_j}=\min\{|d_1^j-c_1^j|,\ldots,|d_n^j-c_n^j|\}$}, we define \mbox{$[S]_{\eta}=\{a\in S\,\,|\,\,a_{i}=k_{i}\eta,k_{i}\in\mathbb{Z},i=1,\ldots,n\}$} if $\eta\neq0$, and $[S]_{\eta}=S$ if $\eta=0$. The set $[S]_{\eta}$ will be used as a finite approximation of the set $S$ with precision $\eta\neq0$. Note that $[S]_{\eta}\neq\emptyset$ for any $\eta\leq\tilde{\eta}$. 
	We use notations $\mathcal{K}$ and $\mathcal{K}_\infty$
	to denote different classes of comparison functions, as follows:
	$\mathcal{K}=\{\alpha:\mathbb{R}_{\geq 0} \rightarrow \mathbb{R}_{\geq 0} |$ $ \alpha$ is continuous, strictly increasing, and $\alpha(0)=0\}$; $\mathcal{K}_\infty=\{\alpha \in \mathcal{K} |$ $ \lim\limits_{s \rightarrow \infty} \alpha(s)=\infty\}$.
	For $\alpha,\gamma \in \mathcal{K}_{\infty}$ we write $\alpha\le\gamma$ if $\alpha(r)\le\gamma(r)$, and, by abuse of notation, $\alpha=c$ if $\alpha(r)=cr$ $\forall r\in\R_{\geq0}$. Finally, we denote by $\II$ the identity function over $\R_{\ge0}$, i.e. $\II(r)=r, \forall r\in \R_{\ge0}$.
	
	\subsection{Nonlinear Impulsive Systems} 
	\bb{Among several classes of impulsive systems studied in the literature, e.g., \cite{4806347,Haddad, Teel}, in this work, we study a class of time-dependent nonlinear impulsive systems as defined next.}
	\begin{definition}\label{def:sys1}
		A nonlinear impulsive system $\Sigma$ is defined by the tuple	$\Sigma=(\R^n,\mathbb U,\mathcal{U},f,g)$,
		where $\R^n$ is the state space, $\mathbb U\subseteq\R^m$ is the input set, $\mathcal{U}$ is the set of all measurable bounded input functions $\nu:\R_{\geq0}\rightarrow \mathbb U$, and $f,g: \R^n\times \mathbb U \rightarrow \R^n $ are locally Lipschitz functions; 
		
		The nonlinear impulsive system $\Sigma $ is described by differential and difference equations of the form
		\begin{align}\label{eq:2}
		\Sigma:\left\{
		\begin{array}{rl}
		\mathbf{\dot{x}}(t)&= f(\mathbf{x}(t),\nu(t)),\quad\quad\quad\quad t\in\R_{\geq0}\backslash \Omega,\\
		\mathbf{x}(t)&= g(\mathbf{x}(\pre{-}t),\nu(t)),\quad\quad\quad\,\, t\in \Omega,
		\end{array}
		\right.
		\end{align}where $\Omega=\{t_k\}_{k\in\N}$ with $t_{k+1}-t_{k}\in\{p_1\tau,\ldots,p_2\tau\}$ for fixed jump parameters $\tau\in\R_{>0}$ and $p_1,p_2\in \N_{\ge1}$, $p_1\le p_2$; and, $\mathbf{x}:\R_{\geq0}\rightarrow \R^n $ is the state signal, which is assumed to be right-continuous for all $t\in\R_{\ge0}$, and $\nu\in\mathcal{U}$ is the input signal. We will use $\mathbf{x}_{x,\nu}(t)$ to denote a point reached at time $t\in \R_{\geq0}$ from initial state $x=\mathbf{x}(0)$ under input signal  $\nu\in\mathcal{U}$. \bb{The Lipschitz condition imposed on $f$ ensures the existence and uniqueness of a solution of system $\Sigma$ in \eqref{eq:2}; see \cite{Kulev, Dishliev} for more details.} We denote by $\Sigma_c$ and $\Sigma_d$ the continuous and discrete dynamics of system $\Sigma$, i.e., $\Sigma_c:\mathbf{\dot{x}}(t)= f(\mathbf{x}(t),\nu(t))$, and $\Sigma_d:\mathbf{x}(t)= g(\mathbf{x}(\pre{-}t),\nu(t))$. 
	\end{definition}	
	\section{Transition Systems and Alternating Simulation Functions}\label{I}
	We start by introducing the class
	of transition systems \cite{Tabu} which allows us to model impulsive and
	symbolic systems in a common framework.
	
	\begin{definition}\label{ts} A transition system is a tuple $T=(X,X_0,U,\TF,Y,\op)$ consisting of:
		\begin{itemize}
			\item a set of states $X$;
			\item a set of initial states $X_0\subseteq X$;
			\item a set of inputs $U$;
			\item transition function $\TF: X\times U \rightrightarrows  X$;
			\item an output set $Y$;
			\item an output map $\mathcal{H}:X\rightarrow Y$.
		\end{itemize}
	\end{definition}
	The transition $x^+\in \TF(x,u)$ means that the system can evolve from state $x$ to state $x^+$ under
	the input $u$. Thus, the transition function defines the
	dynamics of the transition system.
	Sets $X$, $U$, and $Y$ are assumed to be subsets of normed vector spaces with appropriate finite dimensions. If for all $x\in  X, u\in  U$, $\text{card}(\TF(x,u))\leq1$ we say that $T$ is deterministic, and non-deterministic otherwise. Additionally, $T$ is called finite if $ X, U$ are finite sets and infinite otherwise. Furthermore, if for all $x\in  X$ 
	there exists $ u\in  U$ such that $\text{card}(\TF(x,u))\neq0$ we say that $T$ is non-blocking. In this work, we only deal with non-blocking transition systems.
	
	Next we introduce a notion of so-called alternating simulation functions, inspired by \cite[Definition $1$]{Girard2009566}, which quantitatively relates two transition systems.
	\begin{definition}\label{sf} 
		Let $T=(X,X_0,U,\TF,Y,\op)$  and $\hat T=(\hat{X},\hat{X}_0,\hat{U},\hat{\TF},\hat{Y},\hat{\op})$ be transition systems with  $\hat Y\subseteq Y$. A function $ \tilde{\mathcal{S}}:X\times \hat X \to \mathbb{R}_{\geq0} $ is called an alternating simulation function from $\hat T$ to $T$
		if there exist $\tilde{\alpha} \in \mathcal{K}_{\infty}$,  $0<\tilde{\sigma}< 1$, $ \tilde{\rho}_{u} \in \mathcal{K}_{\infty}\cup \{0\} $, and some $\tilde{\varepsilon}\in \mathbb{R}_{\geq 0}$ so that the following hold:
		\begin{itemize}
			\item For every $ x\in X,\hat{x}\in\hat{X}$, one has
			\begin{align}\label{sf1}
			\tilde{\alpha} (\Vert\op(x)-\hat{\op}(\hat{x})\Vert) \leq \tilde{\SF}(x,\hat{x}).
			\end{align}
			\item For every $x\in X,\hat{x}\in\hat{X},\hat u\in\hat{U}$, there exists $u\in U$ such that for every $ x^+\in\TF(x,u)$  there exists $\hat{x}^+\in\hat{\TF}(\hat{x},\hat{u})$ so that 
			\begin{align}\label{sf2}
			\tilde{\SF}&(x^+,\hat{x}^+)\leq \tilde{\sigma} \tilde{\SF}(x,\hat{x})+\tilde{\rho}_u(\Vert\hat{u}\Vert_{\infty} )+\tilde{\varepsilon}.
			\end{align}
		\end{itemize}
	\end{definition}
	\bb{The next lemma is adapted from \cite[Theorem $1$]{arxiv} and stated without a proof. This lemma is needed in the proof of Proposition \ref{error}.} 
	\bb{\begin{lemma}\label{sfm} 
			Let $ \tilde{\mathcal{S}}$ be an alternating simulation function from $\hat T$ to $T$ as in Definition \ref{sf}.  
			Then for every $x\in X,\hat{x}\in\hat{X},\hat u\in\hat{U}$, there exists $u\in U$ such that for every $ x^+\in\TF(x,u)$ there exists $\hat{x}^+\in\hat{\TF}(\hat{x},\hat{u})$ so that 
			\begin{align}\label{sf2m}
			\tilde{\SF}&(x^+,\hat{x}^+)\leq \max\{{\sigma}\tilde{\SF}(x,\hat{x}),{\rho}(\Vert\hat{u}\Vert_{\infty} ),{\varepsilon}\},
			\end{align}where ${\sigma}=1-(1-\psi)(1-{\tilde{\sigma}})$, $ {\rho}=\frac{1}{(1-{\tilde{\sigma}})\psi}{\tilde{\rho}_u}$, and ${\varepsilon}=\frac{\tilde{\varepsilon}}{(1-{\tilde{\sigma}})\psi}$ for an arbitrarily chosen positive constant $\psi<1$, and $\tilde{\sigma},\tilde{\rho}_u,\tilde{\varepsilon}$ appearing in Definition \ref{sf}. 
	\end{lemma}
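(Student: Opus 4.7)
The plan is to reduce the claim to a purely arithmetic inequality and then dispatch it by the classical ``sum-to-max'' case analysis, with the positive slack $(1-\tilde\sigma)\psi$ playing the role of absorbing the additive perturbation terms.

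First, I fix $x\in X$, $\hat x\in\hat X$, $\hat u\in\hat U$. By Definition \ref{sf} applied to $\tilde{\mathcal{S}}$, there already exists $u\in U$ such that for every $x^+\in\TF(x,u)$ one can pick $\hat x^+\in\hat{\TF}(\hat x,\hat u)$ for which \eqref{sf2} holds; I keep this same $u$ and the same pair $(x^+,\hat x^+)$. It then suffices to prove the pointwise scalar inequality
\begin{equation*}
\tilde\sigma\, a + b + c \;\leq\; \max\Bigl\{\sigma\, a,\;\tfrac{b}{(1-\tilde\sigma)\psi},\;\tfrac{c}{(1-\tilde\sigma)\psi}\Bigr\}
\end{equation*}
for all $a,b,c\geq 0$, since the conclusion \eqref{sf2m} follows by substituting $a=\tilde{\mathcal{S}}(x,\hat x)$, $b=\tilde\rho_u(\Vert\hat u\Vert_\infty)$, $c=\tilde\varepsilon$.

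The main step is a case distinction. Set $\Delta:=(1-\tilde\sigma)\psi>0$. In the regime $b+c\leq \Delta\, a$, the additive bound immediately gives $\tilde\sigma\, a+b+c\leq (\tilde\sigma+\Delta)\, a=\sigma\, a$, which matches the first argument of the maximum. In the complementary regime $a<(b+c)/\Delta$, substituting back into \eqref{sf2} yields $\tilde\sigma\, a+b+c<\sigma(b+c)/\Delta$, and one now needs to route $b$ and $c$ into the correct arguments of the max. Rather than lumping $b+c$ together, which would introduce a spurious factor of two, I would apply the sum-to-max trick in two nested stages, as in \cite[Theorem 1]{arxiv}: first treat the pair $(\tilde\sigma\, a,b)$ with slack $\tfrac{1}{2}\Delta a$, and then absorb $c$ against the remaining slack $\tfrac{1}{2}\Delta a$, so that whichever perturbation dominates appears on its own divided by the full $\Delta$.

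The step I expect to be the main obstacle is precisely this last bookkeeping: avoiding the naive estimate $b+c\leq 2\max\{b,c\}$ (which would spoil the announced constants $\rho=\tilde\rho_u/\Delta$ and $\varepsilon=\tilde\varepsilon/\Delta$) by running the split in two stages rather than one. Apart from that, the proof is entirely algebraic; no structural property of $T$ or $\hat T$ beyond the alternating-simulation relation packaged in Definition \ref{sf} is used, and the choice of successor $(u,x^+,\hat x^+)$ fixed at the outset carries through to both arms of the case analysis unchanged.
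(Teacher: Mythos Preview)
The paper does not prove this lemma; it is imported from a cited reference and explicitly ``stated without a proof,'' so there is no in-paper argument to compare against. Nonetheless, your proposal has a real gap at exactly the spot you flag as the main obstacle.

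Your reduction to the scalar claim $\tilde\sigma\, a + b + c \leq \max\{\sigma a,\, b/\Delta,\, c/\Delta\}$ with $\Delta=(1-\tilde\sigma)\psi$ and $\sigma=\tilde\sigma+\Delta$ is correct, but that scalar inequality is \emph{false} whenever $\Delta>\tfrac12$: take $a=0$ and $b=c=1$, so the left side equals $2$ while the right side equals $1/\Delta<2$. Since the lemma allows any $\psi\in(0,1)$ and $\tilde\sigma\in(0,1)$ is unrestricted, the regime $\Delta>\tfrac12$ is within the stated hypotheses. Consequently no ``two-stage'' bookkeeping can recover the announced constants; splitting the slack as $\tfrac12\Delta a+\tfrac12\Delta a$ merely shifts the factor of two into the denominators (you land on $2b/\Delta$ and $2c/\Delta$), which is exactly the ``naive'' outcome you were trying to avoid. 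The lemma as printed therefore appears to be misstated relative to the source it adapts: either the constants $\rho,\varepsilon$ should carry an extra factor, or $\tilde\rho_u(\Vert\hat u\Vert_\infty)+\tilde\varepsilon$ should be bundled as a single perturbation term, or $\psi$ should be restricted so that $(1-\tilde\sigma)\psi\leq\tfrac12$. For the paper's only downstream use (Proposition~\ref{error}) any such corrected version suffices, and your Case~1/Case~2 split already proves it once you accept the extra constant.
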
}
	
	\bb{Before showing the next result, let us recall the definition of an alternating simulation relation introduced in \cite{pt09}.
		\begin{definition}\label{sr} 
			Let $T=(X,X_0,U,\TF,Y,\op)$  and $\hat T=(\hat{X},\hat{X}_0,\hat{U},\hat{\TF},\hat{Y},\hat{\op})$ be transition systems with  $\hat Y\subseteq Y$. A relation $R\subseteq X\times \hat{X}$ is called an $\hat{\varepsilon}$-approximate alternating simulation relation from $\hat T$ to $T$ if for any $(x,\hat x)\in R$
			\begin{itemize}
				\item $(i)$ $\Vert\op(x)-\hat{\op}(\hat{x})\Vert \leq \hat{\varepsilon}$;
				\item $(ii)$ For ny $\hat u\in\hat{U}$, there exists $u\in U$ such that for all $x^+\in\TF(x,u)$ there exists $ \hat{x}^+\in\hat{\TF}(\hat{x},\hat{u})$ satisfying $(x^+,\hat{x}^+)\in R$
			\end{itemize}
	\end{definition}}
	\bb{In addition, if ($ii$) still holds when reversing the role of $T$ and $\hat T$, The relation $R$ is in fact an $\hat{\varepsilon}$-approximate alternating bisimulation relation between $T$ and $\hat T$ \cite{pt09} (see Remark 4).}
	
	The next result shows that the existence of an alternating simulation function for transition systems implies the existence of an approximate alternating simulation relation between them as as defined above	
	\begin{proposition}\label{error}
		Let $T=(X,X_0,U,\TF,Y,\op)$  and $\hat T=(\hat{X},\hat{X}_0,\hat{U},\hat{\TF},\hat{Y},\hat{\op})$ be transition systems with  $\hat Y\subseteq Y$. Assume $ \tilde{\SF}$ is an alternating simulation function from $\hat T$ to $T$ as in Definition \ref{sf} and that there exists $r\in \R_{>0}$ such that $\Vert \hat{u} \Vert_{\infty} \leq r$ for all $ \hat{u} \in {\hat{U}}$. Then, relation $R\subseteq X\times \hat{X}$ defined by $$R=\left\{(x,\hat{x})\in {X}\times \hat{X}|\tilde{\SF}(x,\hat{x})\leq \max\left\{{\rho}(r),{\varepsilon}\right\}\right\},$$ where $\rho,\varepsilon$ as in Lemma \ref{sfm}, is an $\hat{\varepsilon}$-approximate alternating simulation relation from $\hat T$ to $T$ with 
		$$\hat{\varepsilon}=\tilde{\alpha}^{-1}(\max\{{\rho}(r),{\varepsilon}\}).$$
	\end{proposition}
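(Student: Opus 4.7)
The plan is a direct verification of the two conditions in Definition \ref{sr} for the candidate relation $R$, leveraging Lemma \ref{sfm} to convert the additive bound \eqref{sf2} into the max-form bound that makes $R$ forward-invariant.

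For condition $(i)$ of Definition \ref{sr}, I would simply chain \eqref{sf1} with the defining inequality of $R$: for any $(x,\hat x)\in R$, we have $\tilde\alpha(\Vert\op(x)-\hat\op(\hat x)\Vert)\le\tilde{\SF}(x,\hat x)\le\max\{\rho(r),\varepsilon\}$, and applying $\tilde\alpha^{-1}$ (which exists and is in $\mathcal{K}_\infty$ since $\tilde\alpha\in\mathcal{K}_\infty$) yields $\Vert\op(x)-\hat\op(\hat x)\Vert\le\tilde\alpha^{-1}(\max\{\rho(r),\varepsilon\})=\hat\varepsilon$.

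The heart of the argument is condition $(ii)$, and here Lemma \ref{sfm} does essentially all the work. Fix $(x,\hat x)\in R$ and $\hat u\in\hat U$. By Lemma \ref{sfm}, there exists $u\in U$ such that for every $x^+\in\TF(x,u)$ there exists $\hat x^+\in\hat{\TF}(\hat x,\hat u)$ with
\begin{equation*}
\tilde{\SF}(x^+,\hat x^+)\le\max\{\sigma\tilde{\SF}(x,\hat x),\rho(\Vert\hat u\Vert_\infty),\varepsilon\}.
\end{equation*}
I then bound each term inside the max by $\max\{\rho(r),\varepsilon\}$. Since $(x,\hat x)\in R$ gives $\tilde{\SF}(x,\hat x)\le\max\{\rho(r),\varepsilon\}$ and $\sigma<1$, the first term obeys $\sigma\tilde{\SF}(x,\hat x)\le\max\{\rho(r),\varepsilon\}$. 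Since $\rho\in\mathcal{K}_\infty\cup\{0\}$ is nondecreasing and $\Vert\hat u\Vert_\infty\le r$, the second term satisfies $\rho(\Vert\hat u\Vert_\infty)\le\rho(r)\le\max\{\rho(r),\varepsilon\}$. The third term is trivially bounded. Hence $\tilde{\SF}(x^+,\hat x^+)\le\max\{\rho(r),\varepsilon\}$, i.e., $(x^+,\hat x^+)\in R$, which is exactly condition $(ii)$.

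I do not anticipate any real obstacle here: once one observes that the contraction factor $\sigma<1$ supplied by Lemma \ref{sfm} guarantees the self-term $\sigma\tilde{\SF}(x,\hat x)$ is absorbed by the threshold defining $R$, everything else is monotonicity of $\rho$ and $\tilde\alpha^{-1}$. The only subtlety worth flagging in the writeup is that the conversion from the additive form \eqref{sf2} to the max form \eqref{sf2m} is essential, since with the additive bound one would pick up an extra $\tilde\varepsilon$ that would prevent $R$ from being forward-invariant; this is precisely why Lemma \ref{sfm} is invoked rather than using \eqref{sf2} directly.
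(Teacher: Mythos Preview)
Your proposal is correct and follows essentially the same approach as the paper's own proof: verify item $(i)$ of Definition~\ref{sr} by chaining \eqref{sf1} with the defining inequality of $R$ and inverting $\tilde\alpha$, and verify item $(ii)$ by invoking Lemma~\ref{sfm} and using $\sigma<1$ together with the bound $\Vert\hat u\Vert_\infty\le r$ to show $R$ is forward-invariant. Your write-up is in fact more detailed than the paper's, which simply states that $(ii)$ follows from the definition of $R$, \eqref{sf2m}, and $0<\sigma<1$.
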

	\begin{proof}
		Item $(i)$ in Definition \ref{sr} is a simple consequence of the definition of $R$ and condition \eqref{sf1} (i.e. $\tilde{\alpha} (\Vert\op(x)-\hat{\op}(\hat{x})\Vert) \leq \tilde{\SF}(x,\hat{x})\leq\max\{{\rho}(r),{\varepsilon}\}$), which results in $\Vert\op(x)-\hat{\op}(\hat{x})\Vert \leq \tilde{\alpha}^{-1}(\max\{{\rho}(r),{\varepsilon}\})=\hat\varepsilon$. Item $(ii)$ in Definition \ref{sr} follows immediately from the definition of $R$, condition \eqref{sf2m} in Lemma \ref{sfm}, and the fact that $0<{\sigma}<1$. In particular, we have $\tilde{\SF}(x^+,\hat{x}^+)\leq\max\{{\rho}(r),{\varepsilon}\}$ which implies $(x^+,\hat{x}^+)\in R$.
	\end{proof}
	
	The approximate alternating simulation relation guarantees that
	for each output behavior of $T$ there
	exists one of $\hat T$ such that the distance between these output behaviors is uniformly bounded by $\hat{\varepsilon}$. 
	\begin{remark}
		Since the input set in all practical applications is bounded,
		requiring the control inputs to be bounded is not restrictive at all. Moreover, under certain properties of impulsive systems (see Section \ref{1:IV}), one can choose function $\tilde{\rho}_{u}$ the definition of $R$ to be identically zero which cancels the dependency to the size of control inputs in Proposition \ref{error}.
		\hfill$\diamond$
	\end{remark}
	%
	\section{Construction of Symbolic Models}\label{1:IV}
	This section contains the main contribution of this work and its results rely on additional assumption on $\mathcal{U}$ that we now describe.
	Consider impulsive system $\Sigma=(\R^n,\mathbb U,\mathcal{U},f,g)$  with jump parameters $\tau$, $p_1$ and $p_2$. We restrict attention to sampled-data impulsive systems, where input curves belong to $\mathcal{U}_{\tau}$ containing only curves, constant in duration $\tau$, i.e.
	\begin{align}\label{input}
	\mathcal{U}_\tau=\{\nu:\R_{\ge0}\rightarrow \mathbb U| \nu(t)=\nu((k-1)\tau),t\in [(k-1)\tau,k\tau),k\in\N_{\geq1}\}.
	\end{align}
	Next we define sampled-data impulsive systems as a transition system. Such a transition system would be the bridge that relates impulsive systems to their symbolic models.
	\begin{definition}\label{tsm} Given an impulsive system $\Sigma=(\R^n,\mathbb U,\mathcal{U}_{\tau},f,g)$, with jump parameters ($\tau$, $p_1$, $p_2$), we define the associated transition system $T_{\tau}(\Sigma)=(X,X_0,U,\TF,Y,\op)$ 
		where:
		\begin{itemize}\addtolength{\itemindent}{-0.4cm}
			\item  $X=\R^n \times \{0,\ldots,p_2\}$;
			\item  $X_0=\R^n\times \{0\}$; 
			\item $U=\mathcal{U}_{\tau}$;
			\item  $(x^+,l^+)\in \TF((x,l),\nu)$ if and only if one of the following scenarios hold:
			\begin{itemize}
				\addtolength{\itemindent}{-0.85cm}
				\item Flow scenario: $0\leq l\leq p_{2}-1$, $x^+= \mathbf x_{x,\nu}(\pre{-}\tau)$, and $l^+=l+1$;
				\item Jump scenario: $p_1\leq l\leq p_2$,  $x^+= g(x,\nu(0))$, and $l^+=0$; 
			\end{itemize}
			\item $Y=\R^n$;
			\item $\mathcal{H}:X\rightarrow Y$, defined as $\mathcal{H}(x,l)=x$.
		\end{itemize}
	\end{definition}
	
	In order to construct a symbolic model for $T_{\tau}(\Sigma)$, we introduce the following assumptions and lemma. 
	
	\begin{assumption}\label{likeiss}
		Consider impulsive system $\Sigma=(\R^n,\mathbb U,\mathcal{U}_{\tau},f,g)$ with jump parameters $\tau$, $p_1$ and $p_2$. Assume that there exist a locally Lipschitz function $ V:\R^n\times \R^n \to \mathbb{R}_{\geq0} $, $\mathcal{K}_{\infty}$ functions $\underline{\alpha}, \overline{\alpha}, \rho_{u_c},\rho_{u_d}$, and constants $\kappa_c\in\R,\kappa_d\in\R_{>0}$, such that the following hold
		\begin{itemize}
			\item $\forall x,\hat x\in \R^n$
			\begin{align}\label{c1}
			\underline{\alpha} (\Vert x-\hat{x}\Vert ) \leq V(x,\hat{x})\leq \overline{\alpha} (\Vert x-\hat{x}\Vert ).
			\end{align}
			\item $\forall x,\hat x\in \R^n~\text{a}.\text{e}$, and $\forall u,\hat u\in \mathbb{U}$
			\begin{align}\label{c2}
			\dfrac{\partial V(x,\hat{x})}{\partial x} f(x,u)+\dfrac{\partial V(x,\hat{x})}{\partial \hat{x}} f(\hat x,\hat{u})
			\leq -\kappa_c V(x,\hat{x})+\rho_{u_c}(\Vert u- \hat{u}\Vert ) .
			\end{align}
			\item $\forall x,\hat x\in \R^n$ and $\forall u,\hat u\in \mathbb{U}$
			\begin{align}\label{c3}
			V(g(x,u),g(\hat x,\hat{ u}))
			\leq {\kappa_d} V(x,\hat{x})+\rho_{u_d}(\Vert u - \hat{u}\Vert ).
			\end{align}
		\end{itemize} 
	\end{assumption}
	\begin{assumption}\label{ass2} 
		There exists a $\mathcal{K}_{\infty}$ function $\hat{\gamma}$ such that
		\begin{align}\label{tinq} 
		\forall x,y,z \in \R^n,~~V(x,y)\leq V(x,z)+\hat{\gamma}(\Vert y-z\Vert).
		\end{align}
	\end{assumption} 
	\begin{remark} Assumption \ref{likeiss} has different implications based on the values of $\kappa_c$ and $\kappa_d$ as the following. Given \eqref{c1} holds: $(i)$ the existence of function $V$ satisfying \eqref{c2} and \eqref{c3} with $k_c\leq 0$ and $k_d\ge 1$ results in incremental forward completeness of the continuous and discrete dynamics of $\Sigma$, respectively, and we say $\Sigma_c$ and $\Sigma_d$ are $\delta$-FC \cite{6082386}; $(ii)$ the existence of function $V$ satisfying \eqref{c2} and \eqref{c3} with $k_c>0$ and $k_d< 1$ results in incremental input-to-state stability of the continuous and discrete dynamics of $\Sigma$, respectively,  and we say $\Sigma_c$ and $\Sigma_d$ are $\delta$-ISS \bb{\cite{angeli09,ruffer}}. 
		In addition, Assumptions \ref{ass2} is non-restrictive conditions provided that one is interested to work on a compact subset of $\R^n$ \cite{zamani2014symbolic}.  	\hfill$\diamond$
	\end{remark} 
	\begin{remark}
		In condition \eqref{c2}, `` $\forall x,\hat x\in\R^n ~ \text{a}.\text{e}.$" should be interpreted as ``for every $x,\hat x\in \R^n$ except on a set of zero Lebesgue-measure in $\R^n$". From Rademacher’s theorem \cite{federer}, the local Lipschitz assumption on function $V$ ensures that $\dfrac{\partial V(x,\hat{x})}{\partial x} f(x,u)+\dfrac{\partial V(x,\hat{x})}{\partial \hat{x}} f(\hat x,\hat{u})$ is well defined, except on a set of measure zero.	\hfill$\diamond$
	\end{remark}
	
	The following lemma provides \bb{a bound to} the evolution of function $V$ in Assumption \ref{likeiss} which is needed in the proof of Theorem \ref{thm1}.
	\begin{lemma}\label{lemma1}
		Consider impulsive system $\Sigma=(\R^n,\mathbb U,\mathcal{U}_{\tau},f,g)$ with jump parameters $\tau$, $p_1$ and $p_2$, where $\mathcal{U}_{\tau}$ is given by \eqref{input}. Let \eqref{c2} in Assumption \ref{likeiss} holds. Then
		for all $x,\hat x\in \R^n$, for all $\nu,\hat \nu\in \mathcal{U}_{\tau}$, and for any two consecutive impulses $(t_{k},t_{k+1})$, one has  
		\begin{align}\label{c4}
		V(\mathbf{x}_{x,\nu}(\pre{-}t_{k+1}),\mathbf{x}_{\hat{x},\hat{\nu}}(\pre{-}t_{k+1}))
		\leq e^{-\kappa_c(t_{k+1}-t_{k})} V(\mathbf{x}_{x,\nu}(t_{k}),\mathbf{x}_{\hat{x},\hat{\nu}}(t_{k}))
		+\frac{1-e^{-\kappa_c(t_{k+1}-t_{k})}}{\kappa_c}\rho_{u_c}(\Vert \nu- \hat{\nu}\Vert_{\infty}).
		\end{align}
	\end{lemma}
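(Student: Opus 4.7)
The plan is to reduce the lemma to a standard scalar comparison argument on the single-variable function
$W(t):=V(\mathbf{x}_{x,\nu}(t),\mathbf{x}_{\hat x,\hat\nu}(t))$ for $t\in[t_k,t_{k+1})$. Between two consecutive impulses both trajectories evolve according to the continuous dynamics $\Sigma_c$, so no jumps need to be handled inside this interval; the only information we use from the impulsive structure is that $[t_k,t_{k+1})$ lies in $\R_{\geq0}\setminus\Omega$.

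First I would note that since $V$ is locally Lipschitz and $\mathbf{x}_{x,\nu}$, $\mathbf{x}_{\hat x,\hat\nu}$ are absolutely continuous solutions of $\dot{\mathbf{x}}=f(\mathbf{x},\nu)$ on $[t_k,t_{k+1})$, the composition $W$ is absolutely continuous there. Hence $W$ is differentiable almost everywhere and equals the integral of its derivative. By the chain rule applied at points of differentiability together with \eqref{c2}, for a.e.\ $t\in[t_k,t_{k+1})$,
\begin{align*}
\dot W(t)
&=\tfrac{\partial V}{\partial x}\bigl(\mathbf{x}_{x,\nu}(t),\mathbf{x}_{\hat x,\hat\nu}(t)\bigr)f(\mathbf{x}_{x,\nu}(t),\nu(t))
+\tfrac{\partial V}{\partial \hat x}\bigl(\mathbf{x}_{x,\nu}(t),\mathbf{x}_{\hat x,\hat\nu}(t)\bigr)f(\mathbf{x}_{\hat x,\hat\nu}(t),\hat\nu(t))\\
&\leq -\kappa_c\, W(t)+\rho_{u_c}(\Vert\nu(t)-\hat\nu(t)\Vert)
\leq -\kappa_c\, W(t)+\rho_{u_c}(\Vert\nu-\hat\nu\Vert_\infty),
\end{align*}
where the last step uses monotonicity of $\rho_{u_c}\in\mathcal{K}_\infty$ and the definition of $\Vert\cdot\Vert_\infty$.

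Next I would apply the standard comparison lemma to the linear differential inequality $\dot W\le -\kappa_c W+c$ with the constant $c:=\rho_{u_c}(\Vert\nu-\hat\nu\Vert_\infty)$. Multiplying by the integrating factor $e^{\kappa_c t}$ and integrating from $t_k$ to $t\in(t_k,t_{k+1})$ yields
\begin{align*}
W(t)\leq e^{-\kappa_c(t-t_k)}W(t_k)+\frac{1-e^{-\kappa_c(t-t_k)}}{\kappa_c}\,\rho_{u_c}(\Vert\nu-\hat\nu\Vert_\infty).
\end{align*}
Taking the left limit $t\to t_{k+1}^-$ and invoking the right-continuity assumption on $\mathbf{x}$ (so that $W(t_k)=V(\mathbf{x}_{x,\nu}(t_k),\mathbf{x}_{\hat x,\hat\nu}(t_k))$) together with the definition of $\mathbf{x}(\pre{-}t_{k+1})$ produces exactly the inequality \eqref{c4}.

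The only delicate points are the a.e.\ chain rule under mere local Lipschitz continuity of $V$ and the possibility that $\kappa_c\le 0$; the first is covered by the Rademacher/absolute continuity argument already alluded to in the paper, and the second causes no trouble because the formula $(1-e^{-\kappa_c h})/\kappa_c$ is understood by continuous extension at $\kappa_c=0$ (equal to $h$) and remains valid for $\kappa_c<0$, where it just amounts to possible growth of $W$. No additional structure of the jump equation or of $\mathcal{U}_\tau$ is required for this intermediate bound.
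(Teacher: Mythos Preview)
Your proposal is correct and follows essentially the same route as the paper: both reduce to the scalar differential inequality $\dot W\le -\kappa_c W+\rho_{u_c}(\Vert\nu-\hat\nu\Vert)$ along the flow via the local-Lipschitz/Rademacher argument (the paper invokes Lemma~1 in \cite{Teel} for this step) and then integrate it by a Gr\"onwall/comparison argument (the paper cites \cite{Szarski}, you use the integrating factor directly). The only cosmetic difference is that you bound the forcing term by its sup norm before integrating, whereas the paper integrates first and bounds afterward.
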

	\begin{proof}
		By using Lemma $1$ in \cite{Teel}, condition \eqref{c2} implies that between any
		two consecutive impulses $ (t_{k},t_{k+1})$, function $V$ is absolutely continuous and satisfies 
		\begin{align*}
		\dot{V}(\mathbf{x}_{x,\nu}(t),\mathbf{x}_{\hat{x},\hat{\nu}}(t))
		\leq-\kappa_c V(\mathbf{x}_{x,\nu}(t),\mathbf{x}_{\hat{x},\hat{\nu}}(t))+\rho_{u_c}(\Vert \nu(t)- \hat{\nu}(t)\Vert)
		\end{align*}
		$ \forall t\in [t_{k},t_{k+1})~\text{a}.\text{e}$. Hence, it follwos from \cite[Theorem $3.1$]{Szarski} that 
		\begin{align*}
		V(\mathbf{x}_{x,\nu}(\pre{-}t_{k+1}),\mathbf{x}_{\hat{x},\hat{\nu}}(\pre{-}t_{k+1}))
		\leq& e^{-\kappa_c(t_{k+1}-t_{k})} V(\mathbf{x}_{x,\nu}(t_{k}),\mathbf{x}_{\hat{x},\hat{\nu}}(t_{k}))
		+\int_{t_{k}}^{t_{k+1}}e^{-\kappa_c(t_{k+1}-s)}\rho_{u_c}(\Vert \nu(s)- \hat{\nu}(s)\Vert)ds\\
		\leq&e^{-\kappa_c(t_{k+1}-t_{k1})} V(\mathbf{x}_{x,\nu}(t_{k}),\mathbf{x}_{\hat{x},\hat{\nu}}(t_{k}))
		+\frac{1-e^{-\kappa_c(t_{k+1}-t_{k})}}{\kappa_c}\rho_{u_c}(\Vert \nu- \hat{\nu}\Vert_{\infty}).
		\end{align*}
	\end{proof}	
	
	We now have all the ingredients to construct a symbolic model $\hat{T}_{\tau}(\Sigma)$ of transition system $T_{\tau}(\Sigma)$ associated to the impulsive system $\Sigma$ admitting a function $V$ that satisfies Assumption \ref{likeiss} as follows.
	\begin{definition}\label{smm} Consider a transition system $T_{\tau}(\Sigma)=(X,X_0,U,\TF,Y,\op)$, associated to the impulsive system $\Sigma=(\R^n,\mathbb U,\mathcal{U}_{\tau},f,g)$. Assume $\Sigma$ admits a function $V$ that satisfies Assumption \ref{likeiss}. Then one can construct symbolic model $\hat T_{\tau}(\Sigma)=(\hat{X},\hat{X}_0,\hat{U},\hat{\TF},\hat{Y},\hat{\op})$ where:
		\begin{itemize}\addtolength{\itemindent}{-0.4cm}
			\item $\hat{X}=\hat{\R}^n\times \{0,\cdots,p_2\}$, where $\hat{\R}^n=[\R^n]_{\eta}$ and $\eta$ is the state space quantization parameter; 
			\item $\hat{X}_0=\hat{\R}^n\times \{0\}$;
			\item $\hat{U}=[\mathbb U]_{\mu}$, where $\mu$ is the input set quantization parameter;
			\item $(\hat{x}^+,l^+)\in \hat{\TF}((\hat{x},l),\hat{u})$ if and only if one of the following scenarios hold:
			\begin{itemize}\addtolength{\itemindent}{-0.95cm} 
				\item Flow scenario: $0\leq l\leq p_{2}-1$, $\Vert\hat{x}^+- \mathbf x_{\hat{x},\hat{u}}(\pre{-}\tau)\Vert\leq\eta$, $l^+=l+1$;
				\item Jump scenario: $p_1\leq l\leq p_2$, $\Vert\hat{x}^+- g(\hat x,\hat{u})\Vert\leq\eta$, $l^+=0$;
			\end{itemize}
			\item $\hat{Y}=Y$;
			\item $\hat{\op}:\hat{X}\rightarrow \hat{Y}$, defined as $\hat{\op}(\hat x,l)=\hat x$.
		\end{itemize} 
	\end{definition}
	An illustration of the computation of the transitions of $\hat T_{\tau}(\Sigma)$ is shown in Figure \ref{trans}.
	In the definition
	of the transition function, and in the remainder of the paper, we abuse
	notation by identifying $\hat u$ with the constant input curve with domain
	$[0,\tau)$ and value $\hat u$.
	\begin{figure}[t]
		\centering
		\subfigure[]{\includegraphics[height=4.0cm, width=4.5cm]{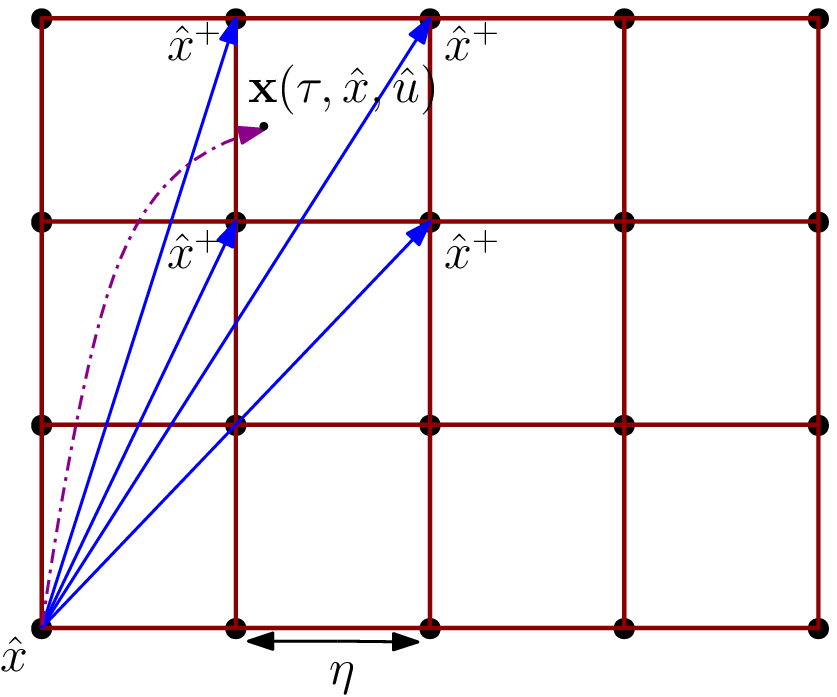}}
		\subfigure[]{\includegraphics[height=4.0cm, width=4.5cm]{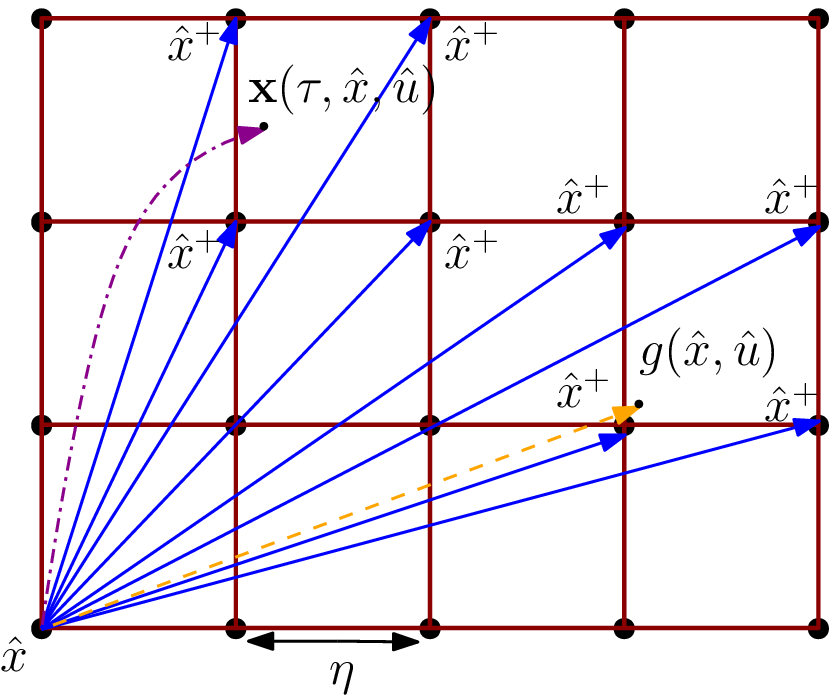}}
		\subfigure[]{\includegraphics[height=4.0cm, width=4.5cm]{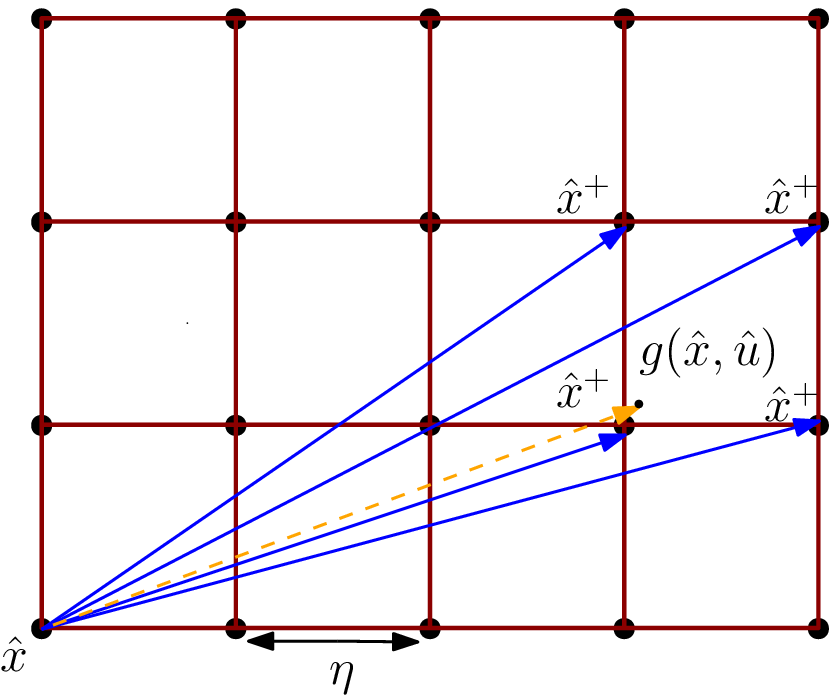}}
		\vspace{-0.3cm}
		\caption{ An illustration of the computation of the transitions of $\hat T_{\tau}(\Sigma)$ for particular $\hat x$ and $\hat u$ with (a) $l<p_1$, (b) $p_1\leq l\leq p_{2}-1$, and (c) $l=p_2$.}
		\label{trans}
	\end{figure}
	Now, we establish the relation between $T_{\tau}(\Sigma)$ and $\hat T_{\tau}(\Sigma)$, introduced above, via the notion of alternating simulation function as in Definition \ref{sf}.
	\begin{theorem}\label{thm1}
		Consider an impulsive system $\Sigma=(\R^n,\mathbb U,\mathcal{U}_\tau,f,g)$ with its associated transition system $T_{\tau}(\Sigma)=(X,X_0,U,\TF,Y,\op)$. Let Assumptions \ref{likeiss}, and \ref{ass2} hold. Consider symbolic model $\hat T_{\tau}(\Sigma)=(\hat{X},\hat{X}_0,\hat{U},\hat{\TF},\hat{Y},\hat{\op})$ constructed as in Definition \ref{smm}. If inequality
		\begin{align}\label{cc1}
		\ln(\kappa_{d})-\kappa_c\tau l<0
		\end{align}
		holds for $l\in\{p_1,p_2\}$, then function $\mathcal{V}$ defined as 
		\begin{align}\label{sm}
		\mathcal{V}((x,l),(\hat{x},l))\Let\left\{
		\begin{array}{lr} 
		V(x,\hat{x})    \quad\quad\quad\,~ \text{if}\quad \kappa_{d}<1 ~\&~ \kappa_c>0,\\
		V(x,\hat{x})e^{\kappa_c\tau \epsilon l}   \quad \text{if}\quad \kappa_{d}\geq1 ~\&~ \kappa_c>0,\\
		V(x,\hat{x})\kappa_{d}^{\frac{l}{\delta}} \quad\quad\, \text{if}\quad \kappa_{d}<1 ~\&~ \kappa_c\leq0,
		\end{array}\right.
		\end{align}
		for some $0<\epsilon<1$ and $\delta>p_2$, is an alternating simulation function from $\hat T_{\tau}(\Sigma)$ to $T_{\tau}(\Sigma)$. 
	\end{theorem}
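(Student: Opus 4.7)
The plan is to verify the two conditions of Definition \ref{sf} directly, with the choice $\tilde\rho_u\equiv 0$. Condition \eqref{sf1} is essentially immediate from \eqref{c1}: since $\mathcal{H}(x,l)=x$, $\hat{\mathcal H}(\hat x,l)=\hat x$, and the multiplicative weight in $\mathcal V$ is bounded below on $l\in\{0,\ldots,p_2\}$, one can take $\tilde\alpha=\underline{\alpha}$ in the first two cases of \eqref{sm} and $\tilde\alpha=\kappa_d^{p_2/\delta}\underline{\alpha}$ in the third case. The substantial work goes into condition \eqref{sf2}.

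For \eqref{sf2}, I fix $(x,l)$, $(\hat x,l)$, $\hat u\in\hat U$ and choose $\nu\in\mathcal U_\tau$ to be the constant input curve equal to $\hat u$ on $[0,\tau)$. This forces $\|\nu-\hat u\|_\infty=0$, which is exactly what allows $\tilde\rho_u$ to be taken as zero. In the flow scenario ($0\le l\le p_2-1$, $l^+=l+1$) I pick $\hat x^+\in[\R^n]_\eta$ to be a nearest grid point to $\mathbf x_{\hat x,\hat u}(\pre{-}\tau)$; this is an admissible symbolic successor satisfying $\|\hat x^+-\mathbf x_{\hat x,\hat u}(\pre{-}\tau)\|\le\eta$. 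Applying Lemma \ref{lemma1} on the single sampling interval with matching inputs and then \eqref{tinq} yields
\begin{align*}
V(x^+,\hat x^+)\le e^{-\kappa_c\tau}V(x,\hat x)+\hat\gamma(\eta).
\end{align*}
In the jump scenario ($p_1\le l\le p_2$, $l^+=0$) I take $\hat x^+$ to be a nearest grid point to $g(\hat x,\hat u)$ and use \eqref{c3} (with $\|u-\hat u\|=0$) together with \eqref{tinq} to obtain
\begin{align*}
V(x^+,\hat x^+)\le \kappa_d V(x,\hat x)+\hat\gamma(\eta).
\end{align*}

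It then remains to multiply both bounds by the $l^+$-dependent weight in \eqref{sm} so as to convert them into bounds on $\mathcal V((x^+,l^+),(\hat x^+,l^+))$, and to check that the resulting contraction rate $\tilde\sigma$ is strictly less than $1$ uniformly over both scenarios and all admissible values of $l$, with $\tilde\varepsilon$ absorbing the bounded scaled version of $\hat\gamma(\eta)$. In Case 1 the weight is $1$ and $\tilde\sigma=\max(e^{-\kappa_c\tau},\kappa_d)<1$ is automatic. In Case 2, the weight $e^{\kappa_c\tau\epsilon l}$ turns the flow coefficient into $e^{-\kappa_c\tau(1-\epsilon)}<1$ for any $\epsilon\in(0,1)$ and the jump coefficient into $\kappa_d e^{-\kappa_c\tau\epsilon l}$, maximized over $l\in\{p_1,\ldots,p_2\}$ at $l=p_1$; condition \eqref{cc1} at $l=p_1$ makes the interval $\bigl(\frac{\ln\kappa_d}{\kappa_c\tau p_1},1\bigr)$ nonempty, and any $\epsilon$ therein works. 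In Case 3, the weight $\kappa_d^{l/\delta}$ produces a flow coefficient $e^{-\kappa_c\tau}\kappa_d^{1/\delta}$ (which is $<1$ iff $\delta<\frac{\ln\kappa_d}{\kappa_c\tau}$ when $\kappa_c<0$) and a jump coefficient $\kappa_d^{1-l/\delta}$ maximized at $l=p_2$ (which is $<1$ iff $\delta>p_2$); condition \eqref{cc1} at $l=p_2$ is exactly what guarantees that the intersection of these two constraints on $\delta$ is nonempty. The main obstacle is this case-by-case bookkeeping: selecting the multiplicative weight in $\mathcal V$ that compensates exactly for the non-contracting component (discrete dynamics in Case 2, continuous dynamics in Case 3) and verifying that \eqref{cc1} at precisely the correct endpoint ($l=p_1$ in Case 2, $l=p_2$ in Case 3) is what renders the set of admissible $\epsilon$ or $\delta$ nonempty.
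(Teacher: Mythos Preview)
Your proposal is correct and follows essentially the same approach as the paper: choose $\nu\equiv\hat u$ to annihilate the input mismatch, derive the two basic $V$-inequalities for flow and jump via Lemma~\ref{lemma1} and \eqref{c3} combined with \eqref{tinq}, and then pass to $\mathcal V$ in each of the three $(\kappa_c,\kappa_d)$-regimes by multiplying through with the $l$-dependent weight. Your identification of which endpoint of \eqref{cc1} is actually binding ($l=p_1$ in Case~2, $l=p_2$ in Case~3) is slightly sharper than the paper's continuity argument, but the underlying strategy is identical.
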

	
	After proving Theorem \ref{thm1}, we will provide additional insight into
	condition \eqref{cc1}. Note that for the case in which $\kappa_{d}\ge1$ and $\kappa_c\leq0$, this condition cannot hold at all. Hence this case is excluded from the definition of $\mathcal{V}$ in \eqref{sm}. 
	\begin{proof} By using \eqref{c1}, $\forall (x,l)\in X$ and $ \forall (\hat{x},l) \in \hat{X}
		$, we have 
		\begin{align*}
		\Vert\op(x,l)-\hat{\op}(\hat{x},l)\Vert=\Vert x-\hat x\Vert&\leq 
		\underline{\alpha}^{-1}(V(x,\hat{x}))\leq\hat{\alpha}\left(\mathcal{V}((x,l),(\hat{x},l))\right),
		\end{align*}
		where
		\begin{align*}
		\hat{\alpha}(s)=\left\{
		\begin{array}{lr} 
		\underline{\alpha}^{-1}(s)    \quad\quad\quad\quad\,\,\, \text{if}\quad \kappa_{d}<1 ~\&~ \kappa_c>0,\\
		\underline{\alpha}^{-1}(e^{-\kappa_c\tau \epsilon p_1} s)  \quad \text{if}\quad \kappa_{d}\geq1 ~\&~ \kappa_c>0,\\
		\underline{\alpha}^{-1}(\kappa_{d}^{-\frac{p_2}{\delta}}s) \quad\quad\,\, \text{if}\quad \kappa_{d}<1 ~\&~ \kappa_c\leq0,
		\end{array}\right.
		\end{align*}
		for all $s\in\R_{\geq0}$. Hence, \eqref{sf1} holds with $\tilde{\alpha}=\hat{\alpha}^{-1}$. 
		
		Now we show that inequality \eqref{sf2} holds as well. Consider any $\hat u\in \hat{U}$ and choose $\nu(\cdot)=\hat{u}$. Then, using \eqref{tinq}, for all $x,\hat x\in \R^n$, for all $\hat{u}\in \hat{U}$, we have in the flow scenario the following inequality:
		\begin{align*}
		V(&\mathbf{x}_{x,\hat{u}}(\pre{-}\tau),\hat{x}^+)\leq V(\mathbf{x}_{x,\hat{u}}(\pre{-}\tau),\mathbf{x}_{\hat{x},\hat{u}}(\pre{-}\tau))+\hat{\gamma}(\Vert \hat{x}^+-\mathbf{x}_{\hat{x},\hat{u}}(\pre{-}\tau)\Vert).
		\end{align*}		
		Now, from Definition \ref{smm}, the above inequality reduces to
		\begin{align*}
		V(\mathbf{x}_{x,\hat{u}}(\pre{-}\tau),\hat{x}^+)&
		\leq V(\mathbf{x}_{x,\hat{u}}(\pre{-}\tau),\mathbf{x}_{\hat{x},\hat{u}}(\pre{-}\tau))+\hat{\gamma}(\eta),
		\end{align*}
		for any $\hat{x}^+$ such that  $(\hat{x}^+,l^+)\in \hat{\TF}((\hat{x},l),\hat{u})$.
		From \eqref{c4} with $t_{k+1}=\tau, t_{k}=0$, one gets 
		\begin{align*}
		V(\mathbf{x}_{x,\hat{u}}(\pre{-}\tau),\mathbf{x}_{\hat{x},\hat{u}}(\pre{-}\tau))
		&\leq e^{-\kappa_c\tau} V(\mathbf{x}_{x,\hat u}(0),\mathbf{x}(0,\hat{x},\hat{u}))=e^{-\kappa_c\tau} V(x,\hat{x})
		\end{align*}
		Hence, for all $x,\hat x\in \R^n$, for all $\hat{u}\in \hat{{U}}$, one obtains
		\begin{align}\label{cv}
		V(\mathbf{x}_{x,\hat{u}}(\pre{-}\tau),\hat{x}^+)&
		\leq e^{-\kappa_c\tau} V(x,\hat{x})+\hat{\gamma}(\eta),
		\end{align}for any $\hat{x}^+$ such that  $(\hat{x}^+,l^+)\in \hat{\TF}((\hat{x},l),\hat{u})$.
		By following similar argument to the previous one and using \eqref{c3}, one also obtains the following inequality in the jump scenario for all $x,\hat x\in \R^n$, and for all $\hat{u}\in \hat{{U}}$
		\begin{align}\label{1cv}
		V(g(x,\hat{u}),\hat{x}^+)&
		\leq \kappa_d  V(x,\hat{x})+\hat{\gamma}(\eta),
		\end{align}for any $\hat{x}^+$ such that  $(\hat{x}^+,l^+)\in \hat{\TF}((\hat{x},l),\hat{u})$.
		
		Now, in order to show function $\mathcal{V}$ defined in \eqref{sm} satisfies \eqref{sf2}, we consider the different scenarios in Definition \ref{smm} and different cases for values of $\kappa_{d}$ and $\kappa_c$ as follows:
		
		\begin{itemize}
			\item $\kappa_{d}<1$ $\&$ $\kappa_c>0$ (\textbf{case 1}):
			\begin{itemize}
				\item Flow scenario $(l^+=l+1)$:
				\begin{align*}
				\mathcal{V}((x^+,l^+),(\hat{x}^+,l^+))&=V(x^+,\hat{x}^+)\leq e^{-\kappa_c\tau} V(x,\hat{x})+\hat{\gamma}(\eta)=e^{-\kappa_c\tau} \mathcal{V}((x,l),(\hat{x},l))+\hat{\gamma}(\eta).
				\end{align*}
				\item Jump scenario $(l^+=0)$:
				\begin{align*}
				\mathcal{V}((x^+,l^+),(\hat{x}^+,l^+))&=V(x^+,\hat{x}^+)\leq \kappa_{d} V(x,\hat{x})+\hat{\gamma}(\eta)=\kappa_{d} \mathcal{V}((x,l),(\hat{x},l))+\hat{\gamma}(\eta).
				\end{align*}
			\end{itemize}
			Let $\lambda_{f}=\max\{e^{-\kappa_c\tau},\kappa_{d}\}$, and $\gamma_{f}=\hat{\gamma}$, then  
			\begin{align*}
			\mathcal{V}((x^+,l^+),(\hat{x}^+,l^+))
			\leq\lambda_{f}\mathcal{V}((x,l),(\hat{x},l))+\gamma_{f}(\eta).
			\end{align*} 
			\item $\kappa_{d}\ge1$ $\&$ $\kappa_c>0$ (\textbf{case 2}):
			\begin{itemize}
				\item Flow scenario $(l^+=l+1)$:
				\begin{align*}
				&\mathcal{V}((x^+,l^+),(\hat{x}^+,l^+))=V(x^+,\hat{x}^+)e^{\kappa_c\tau \epsilon l^+}=V(x^+,\hat{x}^+)e^{\kappa_c\tau \epsilon (l+1)}\leq(e^{-\kappa_c\tau} V(x,\hat{x})+\hat{\gamma}(\eta))e^{\kappa_c\tau \epsilon (l+1)}\\
				&=e^{-\kappa_c\tau}e^{\kappa_c\tau \epsilon}e^{\kappa_c\tau \epsilon l} V(x,\hat{x})+\frac{\hat{\gamma}(\eta)}{e^{-\kappa_c\tau \epsilon (l+1)}}=e^{-\kappa_c\tau(1-\epsilon)}\mathcal{V}((x,l),(\hat{x},l))+\frac{\hat{\gamma}(\eta)}{e^{-\kappa_c\tau \epsilon (l+1)}}.
				\end{align*}
				\item Jump scenario $(l^+=0)$:
				\begin{align*}
				&\mathcal{V}((x^+,l^+),(\hat{x}^+,l^+))=V(x^+,\hat{x}^+)e^{\kappa_c\tau \epsilon l^+}=V(x^+,\hat{x}^+)\leq\kappa_d V(x,\hat{x})+\hat{\gamma}(\eta)\\
				&=\frac{e^{\kappa_c\tau \epsilon l}}{e^{\kappa_c\tau \epsilon l}}\kappa_d V(x,\hat{x})+\hat{\gamma}(\eta)=e^{-\kappa_c\tau \epsilon l}\kappa_{d}\mathcal{V}((x,l),(\hat{x},l))+\hat{\gamma}(\eta). 
				\end{align*}
			\end{itemize}
			Let  $\lambda_{f}=\max\{e^{-\kappa_c\tau(1-\epsilon)},e^{-\kappa_c\tau \epsilon p_1}\kappa_{d}\}$, and $\gamma_{f}=e^{\kappa_c\tau \epsilon (p_2+1)}\hat{\gamma}$, then  
			\begin{align*}
			\mathcal{V}&((x^+,l^+),(\hat{x}^+,l^+))
			\leq\lambda_{f}\mathcal{V}((x,l),(\hat{x},l))+\gamma_{f}(\eta).
			\end{align*} 
			\item $\kappa_{d}<1$ $\&$ $\kappa_c\leq0$ (\textbf{case 3}):
			\begin{itemize}
				\item Flow scenario $(l^+=l+1)$:
				\begin{align*}
				&\mathcal{V}((x^+,l^+),(\hat{x}^+,l^+))=V(x^+,\hat{x}^+)\kappa_{d}^{\frac{l^+}{\delta } }=V(x^+,\hat{x}^+)\kappa_{d}^{\frac{(l+1)}{\delta }}\leq(e^{-\kappa_c\tau} V(x,\hat{x})+\hat{\gamma}(\eta))\kappa_{d}^{\frac{(l+1)}{\delta }}\\
				&=e^{-\kappa_c\tau}\kappa_{d}^{\frac{l}{\delta }}\kappa_{d}^{\frac{1}{\delta }} V(x,\hat{x})+\hat{\gamma}(\eta)\kappa_{d}^{\frac{(l+1)}{\delta }}=e^{-\kappa_c\tau}\kappa_{d}^{\frac{1}{\delta }}\mathcal{V}((x,l),(\hat{x},l))+\hat{\gamma}(\eta)\kappa_{d}^{\frac{(l+1)}{\delta }}.
				\end{align*}
				\item Jump scenario $(l^+=0)$:
				\begin{align*}
				&\mathcal{V}((x^+,l^+),(\hat{x}^+,l^+))=V(x^+,\hat{x}^+)\kappa_{d}^{\frac{l^+}{\delta }}=V(x^+,\hat{x}^+)\leq\kappa_d V(x,\hat{x})+\hat{\gamma}(\eta)\\
				&=\frac{\kappa_{d}^{\frac{l}{\delta }}}{\kappa_{d}^{\frac{l}{\delta }}}\kappa_d V(x,\hat{x})+\hat{\gamma}(\eta)=\kappa_{d}^{\frac{\delta-l}{\delta }}\mathcal{V}((x,l),(\hat{x},l))+\hat{\gamma}(\eta). 
				\end{align*}
			\end{itemize}
			Let $\lambda_{f}=\max\{e^{-\kappa_c\tau}\kappa_{d}^{\frac{1}{\delta }},\kappa_{d}^{\frac{\delta-p_2}{\delta }}\}$, and $\gamma_{f}=\hat{\gamma}$, then  
			\begin{align*}
			\mathcal{V}&((x^+,l^+),(\hat{x}^+,l^+))
			\leq\lambda_{f}\mathcal{V}((x,l),(\hat{x},l))+\gamma_{f}(\eta).
			\end{align*} 
		\end{itemize}
		To continue with the proof, we need to show that $\lambda_{f}<1$ for \textbf{case 2} and \textbf{case 3} (\textbf{case 1} is trivial).
		In \textbf{case 2}, note that $e^{-\kappa_c\tau(1-\epsilon)}<1$ since  $0<\epsilon<1$ and $\kappa_c>0$. Additionally, $e^{-\kappa_c\tau \epsilon p_1}\kappa_{d}<1 \Leftrightarrow \ln(\kappa_{d})-\kappa_c\tau\epsilon p_1<0$. By continuity of the real number
		, we can
		always find some $0<\epsilon<1$  such that $\ln(\kappa_{d})-\kappa_c\tau l<0$,  $l\in\{p_1,p_2\}$, implies $\ln(\kappa_{d})-\kappa_c\tau \epsilon p_1<0$. Hence, $\lambda_{f}<1$.
		Similarly, in \textbf{case 3}, we have $\kappa_{d}^{\frac{\delta-p_2}{\delta }}<1$ since  $\delta>p_2$ and $\kappa_d<1$. Moreover, 
		$e^{-\kappa_c\tau}\kappa_{d}^{\frac{1}{\delta }}<1\Leftrightarrow\ln(\kappa_{d})-\kappa_c\tau \delta <0$. By continuity of the real number, we can
		always find some $\delta>p_2$ such that $\ln(\kappa_{d})-\kappa_c\tau l<0$,  $l\in\{p_1,p_2\}$, implies $\ln(\kappa_{d})-\kappa_c\tau \delta<0$. Therefore, $\lambda_{f}<1$. 
			Hence, for all $((x,l),(\hat{x},l))\in X\times\hat{X}$, for all $\hat{u}\in \hat{U}$, for any $\hat{x}^+$ such that $(\hat{x}^+,l^+)\in \hat{\TF}((\hat{x},l),\hat{u})$, $\mathcal{V}$ satisfies inequality \eqref{sf2}	with $\nu=\hat{u}$, $\tilde{\sigma}={\lambda}_{f}$, $\tilde{\varepsilon}={\gamma}_{f}(\eta)$, and $\tilde{\rho}_{u}=0$. Thus, $\mathcal{V}$ is an alternating simulation function from $\hat T_{\tau}(\Sigma)$ to $T_{\tau}(\Sigma)$.

	\end{proof}	
	
	\begin{remark}\label{biss}
		One can also verify that function $\mathcal{V}$ given by \eqref{sm} is also an alternating simulation function from $T_{\tau}(\Sigma)$ to $\hat T_{\tau}(\Sigma)$. In particular, $\mathcal{V}$ satisfies \eqref{sf1} and \eqref{sf2} with choosing $\hat{u}$ satisfying\footnote{By the structure of $\hat{{U}}$, there always exists $\hat{u}$ satisfying $\Vert\hat{u}-\nu\Vert\leq\mu$.} $\Vert\hat{u}-\nu\Vert\leq\mu$, same $\tilde{\sigma}$, $\tilde{\rho}_{u}$ defined in Theorem \ref{thm1},  $\tilde{\varepsilon}={\gamma}_{f}(\eta)+\max\left\{\frac{1-e^{-\kappa_c(t_{k+1}-t_{k})}}{\kappa_c}\rho_{u_c},\rho_{u_d}\right\}(\mu)$ for \textbf{case 1} and \textbf{3}, and  $\tilde{\varepsilon}={\gamma}_{f}(\eta)+\max\left\{e^{\kappa_c\tau \epsilon (p_2+1)}\frac{1-e^{-\kappa_c(t_{k+1}-t_{k})}}{\kappa_c}\rho_{u_c},\rho_{u_d}\right\}(\mu)$ for \textbf{case 2}. \bb{Observe that the existence of a function $\mathcal V$ serving as an alternating simulation function in both directions, i.e. from $T_{\tau}(\Sigma)$ to $\hat T_{\tau}(\Sigma)$ and from $\hat T_{\tau}(\Sigma)$ to $ T_{\tau}(\Sigma)$, implies the existence of an approximate alternating bisimulation relation between $T_{\tau}(\Sigma)$ and $\hat T_{\tau}(\Sigma)$ as introduced in \cite{pt09}.} Consequently, $\hat T_{\tau}(\Sigma)$ is a complete symbolic model for $T_{\tau}(\Sigma)$. The completeness of the symbolic model implies that there exists a controller enforcing the desired specifications on the symbolic model $\hat T_{\tau}(\Sigma)$ \emph{if and only if} there exists a controller enforcing the same specifications on $T_{\tau}(\Sigma)$.
		\hfill$\diamond$
	\end{remark}  
	
	\begin{remark}\label{finite}
		The symbolic model $\hat T_{\tau}(\Sigma)$ has a countably infinite set of states. However, in practical applications, the physical variables are restricted to a compact set. Hence, we are usually interested in the dynamics of the impulsive system only on a compact subset $\mathbb{X}\subseteq\R^n$. Then, we can restrict the set of states of $\hat T_{\tau}(\Sigma)$ to the sets $([\R^n]_{\eta}\cap\mathbb{X})\!\times\! \{0,\cdots,p_2\}$ which is finite. We refer the interested readers to the explanation provided after Remark 4.1 in \cite{6082386} for more details. 
		\hfill$\diamond$
	\end{remark} 
	
	Finally, we would like to provide a discussion on condition \eqref{cc1} in Theorem \ref{thm1}. In the case when $\kappa_{d}<1$ and $\kappa_c>0$, the continuous and discrete dynamics of $\Sigma$ are $\delta$-ISS, and, clearly, \eqref{cc1} always holds. For the case when $\kappa_c>0$ and $\kappa_{d}\ge1$, the continuous dynamic $\Sigma_c$ is $\delta$-ISS while the discrete dynamic $\Sigma_d$ is $\delta$-FC. In order for condition \eqref{cc1} to hold in this case, $\kappa_c$ should be large enough to accommodate the undesirable effect of $\kappa_{d}$ and that the impulses do not happen too frequently. Finally, $\kappa_c\leq0$ and $\kappa_{d}<1$ corresponds to the case that the continuous dynamic $\Sigma_c$ is $\delta$-FC while the discrete one $\Sigma_d$ is $\delta$-ISS. Here, we require the impulses to happen very often and $\kappa_{d}$ to be small enough to accommodate the undesirable effect of $\kappa_{c}$. 
	\bb{Note that condition \eqref{cc1} ensures that an
		increase in the value of function $V$ in Assumption \ref{likeiss} during flows is compensated by
		a decrease at jumps and vice versa. A similar argument was used in \cite[Sections 4,5,6]{Teel} to reason about input-to-state stability of impulsive systems, and we expect that by utilizing Assumption \ref{likeiss} with condition \eqref{cc1}, one can get $\delta$-ISS for system $\Sigma$ in \eqref{eq:2}.}     
	
	%
	\section{Case study: A storage-delivery process model}\label{sec:Examples}
	
	In this case study, we apply our approach to a variant of the storage-delivery process model from \cite{DASHKOVSKIY2016}.
	Let the number $x\in\R_{\geq0} $ of goods in a
	storage be continuously evolving proportionally to the
	number of items with rate coefficient $a$. At every time instant
	$t\in\Omega=\{t_k\}_{k\in\N}$, with $t_{k+1}-t_{k}\in\{p_1\tau,\ldots,p_2\tau\}$ for a fixed jump parameters $\tau\in\R_{>0}$ and $p_1,p_2\in \N_{\ge1}$, $p_1\le p_2$, a truck comes to the storage and delivers $(b-1)\%$, or picks up $(1-b)\%$ of the current items. Let $c$ denote the number of items per time unit that can be added, through lineside delivery from the factory to the storage, or taken out, from the storage to other locations during $t\in(t_{k+1},t_{k})$. Similarly, let $d$ be the number of items that can be added, or taken out, from the storage at time instants $t\in\Omega$. The delivery and picking-up process is controlled by the input $\nu(t)=\nu(0)\in \{-1,0,1\},t\in [0,\tau)$. The evolution of
	this process can be modeled as
	\begin{align}\label{ex}
	\Sigma:\left\{
	\begin{array}{rl}
	\mathbf{\dot{x}}(t)&= a\mathbf{{x}}+c\nu(t),\quad\quad\quad\quad t\in\R_{\geq0}\backslash \Omega,\\
	\mathbf{x}(t)&= b\mathbf{x}(\pre{-}t)+d\nu(t),\quad\quad\,   t\in \Omega.
	\end{array}
	\right.
	\end{align}
	\begin{figure}[t]
		\centering
		\subfigure[]{\includegraphics[height=4.0cm, width=4.5cm]{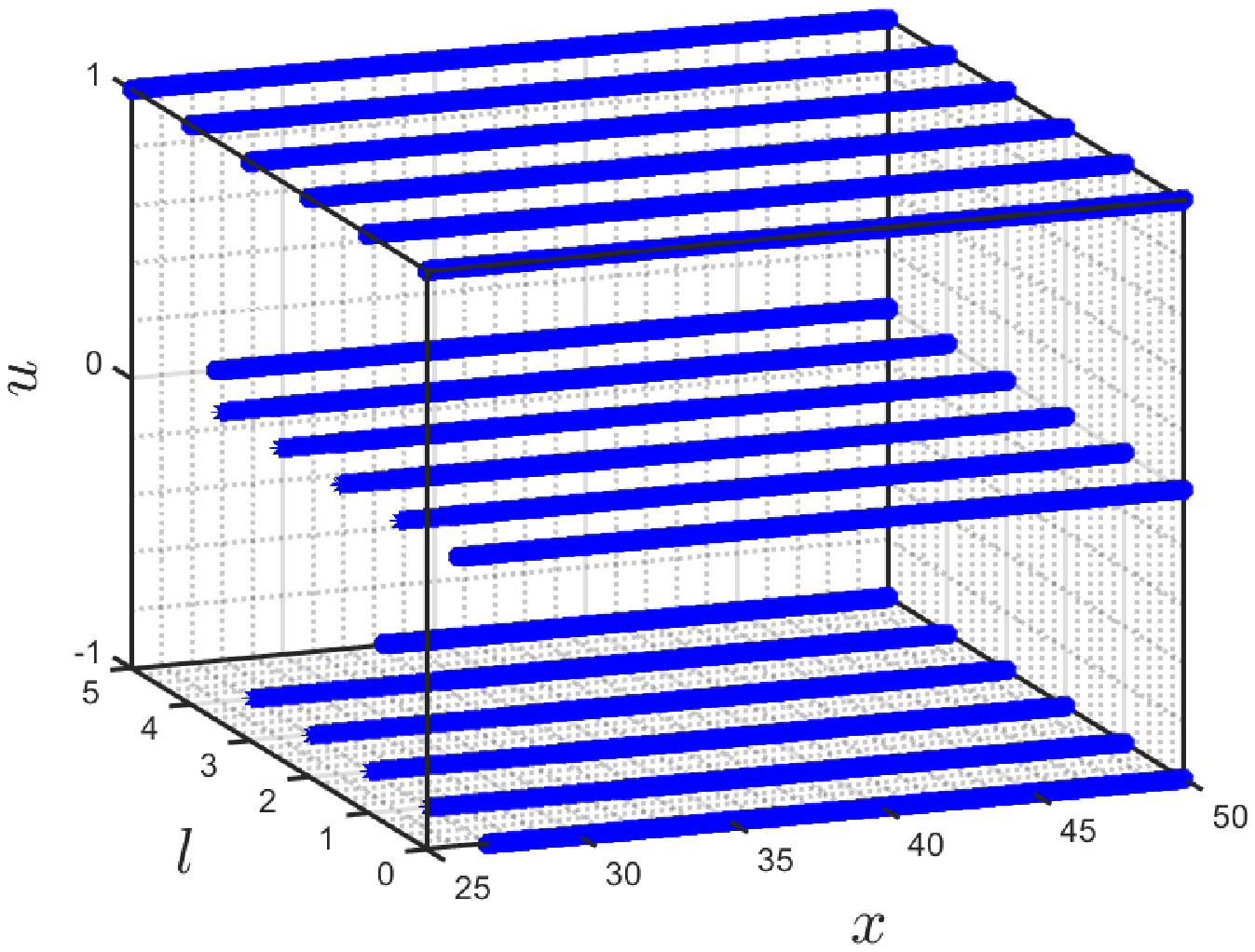}}
		\subfigure[]{\includegraphics[height=4.0cm, width=4.5cm]{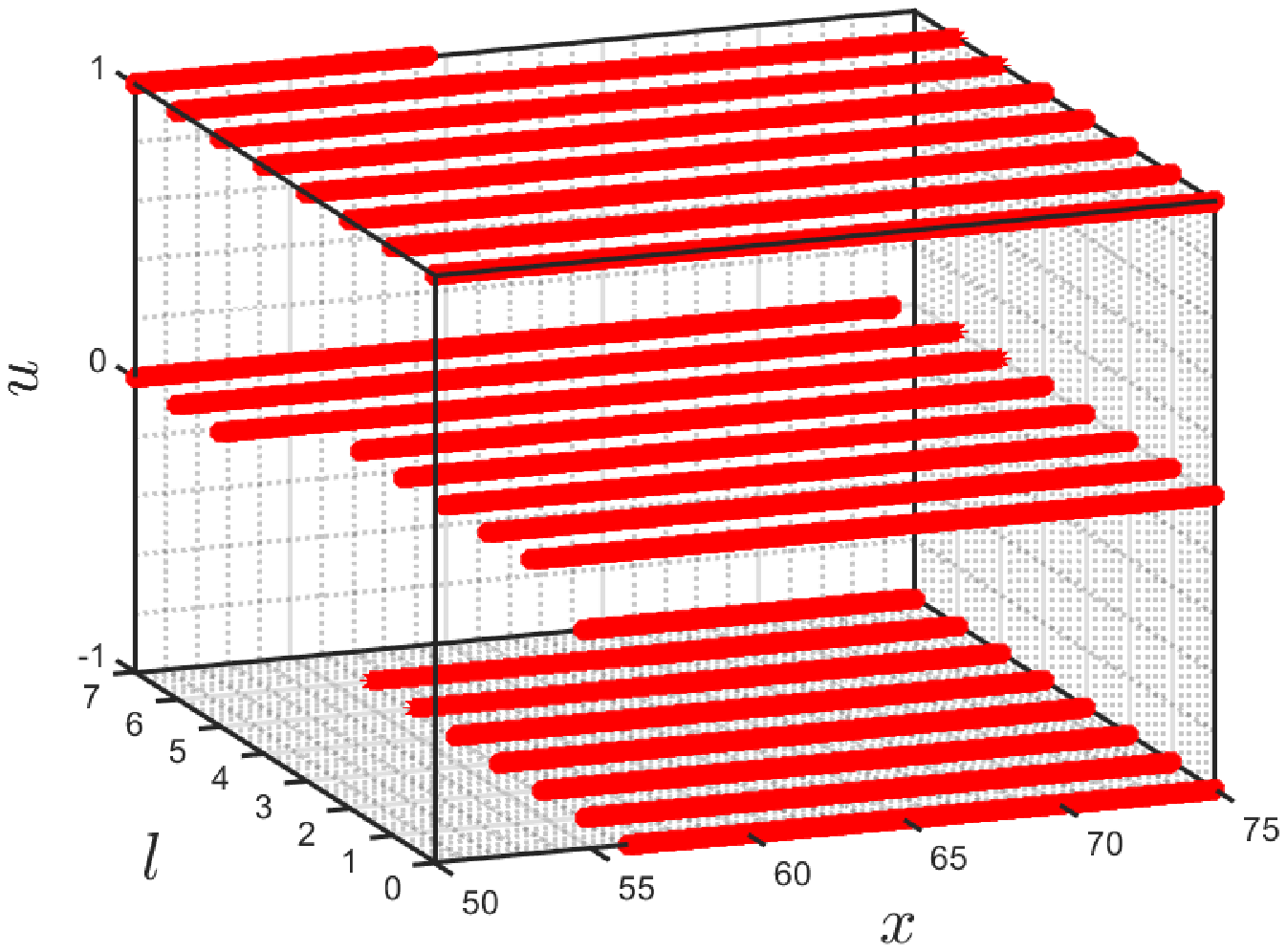}}
		\subfigure[]{\includegraphics[height=4.0cm, width=4.5cm]{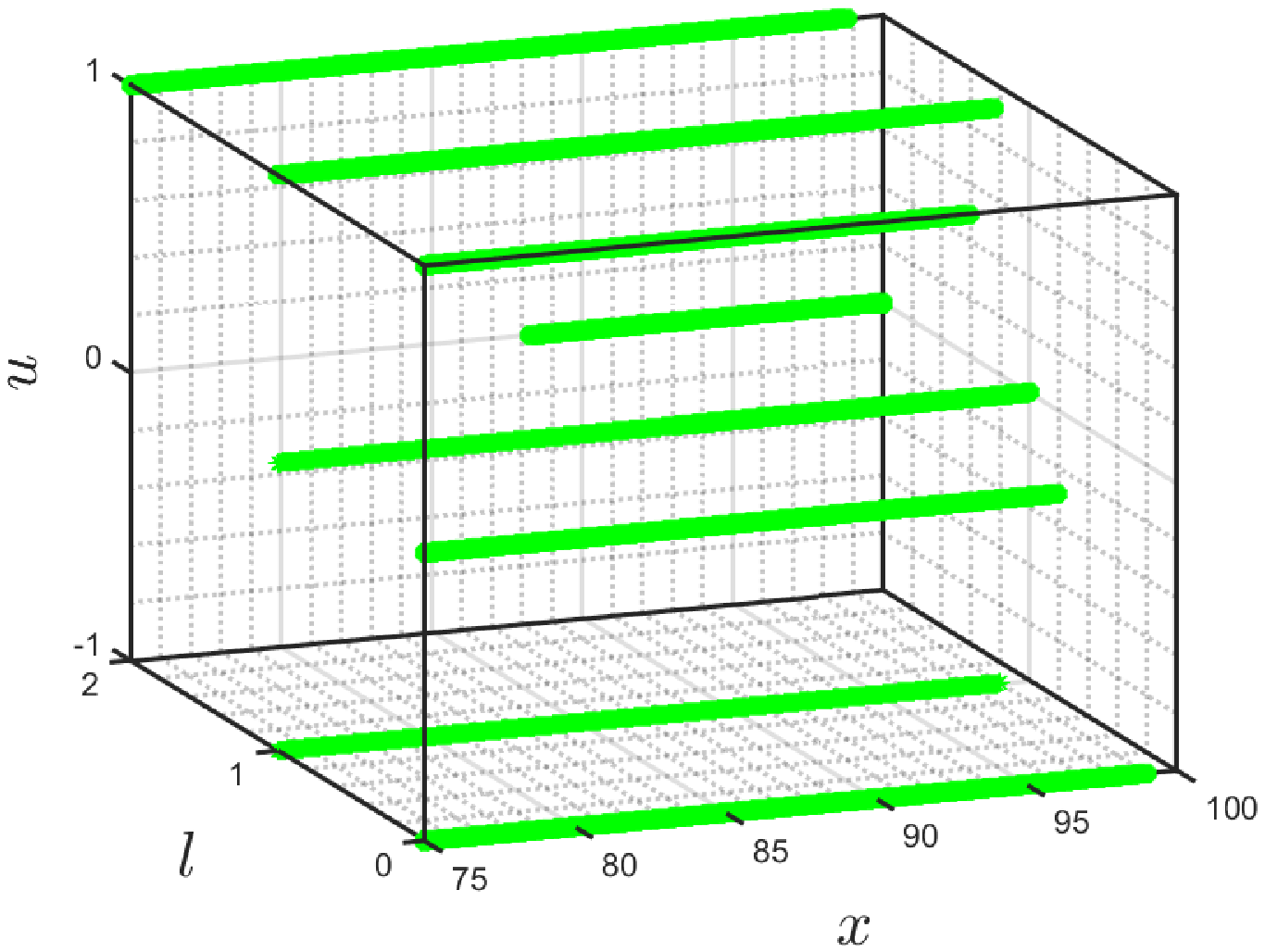}}
		\caption{Controllers with their corresponding domains: (a) \textbf{case 1}, (b) \textbf{case 2}, (c) \textbf{case 3}.}
		\label{domain}
	\end{figure}
	In order to construct a symbolic model for impulsive system $\Sigma$, we start by checking Assumptions \ref{likeiss} and \ref{ass2}. It can be shown that conditions \eqref{c1}, \eqref{c2}, and \eqref{c3} hold with $V(x,x')=\Vert x-x'\Vert$ with $\underline{\alpha}= \overline{\alpha}=\II$, $\rho_{u_c}=|c|$, $\rho_{u_d}=|d|$, $\bb{\kappa_c=-a}$, and $\kappa_d=|b|$. Moreover, condition \eqref{tinq} holds with $\hat{\gamma}=\II$. Given that \eqref{cc1} holds for $l\in\{1,p\}$, and, with a proper choice\footnote{$\epsilon=1-\varsigma$, and $\delta=p_2+\varsigma$ with $\varsigma$ sufficiently small.} of $\epsilon$ and $\delta$, function $\mathcal V(x,\hat x)$ given by \eqref{sm} is an alternating simulation function from $\hat T_{\tau}(\Sigma)$, constructed as in Definition \ref{smm}, to $T_{\tau}(\Sigma)$. In particular, $\mathcal V$ satisfies conditions \eqref{sf1} and \bb{\eqref{sf2}} with functions $\tilde{\alpha},\tilde{\rho}_u$ and constants $\tilde{\sigma}$, $\varepsilon$ given below based on the value of $a$ and $b$, with $\psi=0.99$.
	\begin{itemize}
		\item $\bb{|b|<1}$ $\&$ $\bb{a<0}$: $\tilde{\alpha}=\II$, $\tilde{\rho}_u=0$, $\tilde{\sigma}=\max\{e^{\bb{a}\tau},|b|\}$, $\tilde{\varepsilon}=\eta$. 
		
		\item $|b|\ge1$ $\&$ $\bb{a<0}$:  $\tilde{\alpha}=e^{\bb{-a}\tau \epsilon p_1}$, $\tilde{\rho}_u=0$, $\tilde{\sigma}=\max\{e^{\bb{a}\tau(1-\epsilon)},e^{\bb{a}\tau \epsilon p_1}|b|\}$, and $\tilde{\varepsilon}=e^{\bb{-a}\tau \epsilon (p_2+1)}\eta$.
		
		\item $|b|<1$ $\&$ $\bb{a\geq0}$:  $\tilde{\alpha}=|b|^{\frac{p_2}{\delta}}$, $\tilde{\rho}_u=0$, $\tilde{\sigma}=\max\{e^{\bb{a}\tau}|b|^{\frac{1}{\delta }},|b|^{\frac{\delta-p_2}{\delta }}\}$, and $\tilde{\varepsilon}=\eta$.
	\end{itemize}
	\begin{figure}[h]
		\centering
		\includegraphics[height=4.5cm, width=9cm]{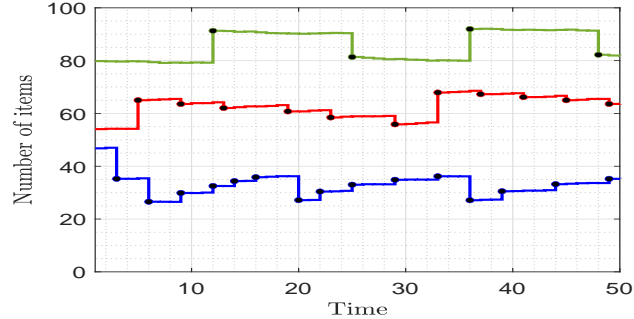}
		\caption{Trajectories of system $\bb{\Sigma}$ for different values of $p_1$, $p_2$, $a$, $b$,$c$, $d$, $\Psi$: \textcolor{blue}{blue} \bb{(bottom)} (\bb{$p_1=1$, $p_2=5$, $a=-0.2$, $b=0.9$, $c=d=5$, $\Psi=\{25,50\}$}), \textcolor{red}{red} \bb{(middle)} (\bb{$p_1=5$, $p_2=7$, $a=-0.3$, $b=1.01$, $c=d=15$, $\Psi=\{50,75\}$}), \textcolor{myco}{green} \bb{(top)} (\bb{$p_1=1$, $p_2=2$, $a=0.2$, $b=0.85$, $c=d=15$, $\Psi=\{75,100\}$}). The jumps are indicated by {\scriptsize $\bullet$}.}
		\label{st}
	\end{figure}  
	The control objective here is to maintain the number of items in a desired range $\Psi$ given by $\Psi=[\psi_l,\psi_u]$ (a safety specification). For \bb{the} sake of numerical illustration, we choose different \bb{combinations} of $p_1$, $p_2$, $a$, $b$, $c$, $d$, $\Psi$, and leverage software tool \texttt{SCOTS} \cite{Rungger} for constructing symbolic models $\hat T_{\tau}(\Sigma)$ and controller $u$ for $T_{\tau}(\Sigma)$ with $\tau=0.2$, and $\eta=0.01$. \bb{The controllers for all cases with their domains are represented on Figure \ref{domain}: (a) \textbf{case 1}, (b) \textbf{case 2}, (c) \textbf{case 3}.}
	In addition, Figure \ref{st} shows trajectories of system $\Sigma$ for different values of $p_1$, $p_2$, $a$, $b$, $c$, $d$, $\Psi$ as follows: \textbf{case 1} (blue) \bb{(bottom)} : \bb{$p_1=1$, $p_2=5$, $a=-0.2$, $b=0.9$, $c=d=10$, $\Psi=\{25,50\}$}; \textbf{case 2} (red) \bb{(middle)}: \bb{$p_1=5$, $p_2=7$, $a=-0.3$, $b=1.01$, $c=d=15$, $\Psi=\{50,75\}$}; \textbf{case 3} (green)\bb{(top)}: \bb{$p_1=1$, $p_2=2$, $a=0.2$, $b=0.85$, $c=d=15$, $\Psi=\{75,100\}$}. Finally, one can compute the mismatch between the output behavior of  $T_{\tau}(\Sigma)$ and its symbolic model $\hat T_{\tau}(\Sigma)$ by utilizing Proposition \ref{error}. In \bb{particular}, we have $\hat{\varepsilon}=\bb{0.25}$ for \textbf{case 1}, $\hat{\varepsilon}=\bb{0.75}$ for \textbf{case 2}, and $\hat{\varepsilon}=\bb{0.65}$ for \textbf{case 3}. 
	\section{Conclusion}
	In this work, we provided an approach for constructing symbolic models of impulsive systems. To do so, we used a notion of alternating simulation functions to relate impulsive systems and their symbolic models. Under some stability properties, we introduced an approach to construct symbolic models for a class of impulsive systems. Finally, we illustrated the effectiveness of our results via a model of storage-delivery-process. 
	
	\bibliographystyle{ieeetr}      
	\bibliography{arxiv} 

\end{document}